\DeclareMathOperator*{\argmax}{arg\,max}
\DeclareMathOperator*{\argmin}{arg\,min}
\newtheorem{prop}{Proposition}
\title{Guidance on Individualized Treatment Rule Estimation in High Dimensions}
\date{}
\author{
  Philippe Boileau \\
  Department of Epidemiology, Biostatistics \\
  and Occupational Health, \\
  Department of Medicine,\\
  McGill University\\
  \texttt{philippe.boileau@mcgill.ca} \\
  \And
  Ning Leng \\
  Genentech Inc. \\
  \texttt{leng.ning@gene.com} \\
  \And
  Sandrine Dudoit \\
  Department of Statistics, \\
  Division of Biostatistics,\\
  Center for Computational Biology,\\
  University of California, Berkeley\\
  \texttt{sandrine@stat.berkeley.edu} \\
}
\newcommand{\reviewer}[1]{#1}
\newcommand\startappendix{%
    \makeatletter
       \setcounter{table}{0}
       \renewcommand{\thetable}{A\arabic{table}}
       \setcounter{figure}{0}
       \renewcommand{\thefigure}{A\arabic{figure}}
       \setcounter{section}{0}
       \renewcommand{\thesection}{A\arabic{section}}
       \setcounter{subsection}{0}
       \renewcommand{\thesubsection}{A\arabic{section}.\arabic{subsection}}
       \setcounter{equation}{0}
       \renewcommand{\theequation}{A\arabic{equation}}
    \makeatother}
\begin{document}
\maketitle

\begin{abstract}
  Individualized treatment rules, cornerstones of precision medicine, inform
  patient treatment decisions with the goal of optimizing patient outcomes.
  These rules are generally unknown functions of patients' pre-treatment
  covariates, meaning they must be estimated from clinical or observational
  study data. Myriad methods have been developed to learn these rules, and these
  procedures are demonstrably successful in traditional asymptotic settings with
  moderate number of covariates. The finite-sample performance of these methods
  in high-dimensional covariate settings, which are increasingly the norm in
  modern clinical trials, has not been well characterized, however. We perform a
  comprehensive comparison of state-of-the-art individualized treatment rule
  estimators, assessing performance on the basis of the estimators' rule
  quality, interpretability, and computational efficiency. Sixteen
  data-generating processes with continuous outcomes and binary treatment
  assignments are considered, reflecting a diversity of randomized and
  observational studies. We summarize our findings and provide succinct advice
  to practitioners needing to estimate individualized treatment rules in high
  dimensions. Owing to these estimators' poor interpretability, we propose a
  novel pre-treatment covariate filtering procedure based on recent work for
  uncovering treatment effect modifiers. We show that it improves estimators'
  rule quality and interpretability. All code is made publicly available,
  facilitating modifications and extensions to our simulation study.
\end{abstract}

\keywords{clinical trials \and heterogeneous treatment effects \and
  observational studies \and precision medicine}

\section{Introduction}
\label{sec:intro}

Broadly, clinical trials are performed to evaluate the efficacy and safety of
novel therapies relative to the standard of care. Efficacy is generally assessed
by performing inference on a pre-defined marginal parameter, like the average
treatment effect. The rise of personalized medicine is however partially at odds
with this approach, especially when the patient population is large and diverse.
Seemingly similar patients' response to a given treatment can vary widely, which
is often attributable to the existence of unknown patient subpopulations.
Identifying these patient subgroups is therefore required to maximize patient
outcomes and optimize trial success rates.

These subpopulations are generally defined as an unknown function of patient
characteristics, like age, sex-at-birth, or genetic mutations. Certain
pre-treatment covariates are therefore said to modify the effect of a treatment.
These variables are referred to as \textit{treatment effect modifiers} (TEMs).

Individualized treatment rules (ITRs) may in turn be defined using these TEMs.
Much effort has been dedicated to developing statistical methods for learning
ITRs, and in particular \textit{optimal} ITRs \citep[see ][ to name but a
few]{robins2004, qian2011, zhang2012a, zhang2012b, tian2014, luedtke2016,
  kunzel2019}. Here, an optimal rule refers to that which optimizes the average
clinical outcome in the patient population. Note too that ITRs may be inferred
from clinical trial and observational study data alike.

Beyond informing patient treatment decisions, estimated ITRs may provide some
insight on the biological functioning of therapies by classifying pre-treatment
covariates as TEMs. TEM classification generally requires that the rules be
inferred using ``interpretable'' methods \citep[see ][ for a discussion on
interpretability in machine learning]{murdoch2019}. Examples include penalized
linear modeling procedures and decision trees, methods capable of communicating
which covariates are believed to interact with the treatment in order to predict
outcomes. Importantly, rules estimated by interpretable ITR estimators can be
vetted by domain experts, possibly increasing the rules trustworthiness from the
perspective of patients and healthcare providers alike.

While many approaches can successfully estimate the optimal ITR in randomized
settings where the number of pre-treatment covariates is small relative to the
number of patients, it is not necessarily the case when there are more
pre-treatment covariates than observations. The challenge is even greater when
estimating this rule from observational study data due to possible confounding.
As in many high-dimensional settings, simplifying assumptions about the
complexity of the estimand must be made to ensure consistent estimation and
therefore reliable predictive performance. Examples include sparsity, such as
subpopulation-specific expression of a particular biomarker
\citep{qian2011,tian2014,chen2017,zhao2021,bahamyirou2022}, or that the outcome
is a (partially) linear function of the covariates and treatment
\citep{tian2014,chen2017}. To the best of our knowledge, there neither exists a
comprehensive comparison of popular ITR estimators in high-dimensional settings
or an analysis of these methods' sensitivity to violations of their assumptions.

Recent developments for uncovering TEMs in high-dimensional data have been
proposed by \citet{boileau2022a,boileau2023} that might partially alleviate the
difficulty of estimating optimal ITRs in these settings. These authors define a
formal statistical framework for treatment effect modifier variable importance
parameters (TEM-VIPs), permitting the reliable classification of individual
pre-treatment covariates as TEMs. The TEM classification methods of
\citet{boileau2022a,boileau2023} control the false positive rate at the nominal
level while maintaining elevated power in high-dimensional data-generating
processes (DGPs).

In the same way that variables can be filtered prior to fitting traditional
regression methods to improve interpretability and predictive performance, we
posit that these TEM-VIP inference procedures may be used with any of the
existing ITR estimators to a similar effect. We refer to this procedure as
\textit{TEM-VIP-based filtering}. Pre-treatment covariates are classified as
TEMs or non-effect modifiers. Inferred TEMs --- in addition to all other
variables known to be clinically relevant --- are then be used by an ITR
estimator to learn the treatment rule. This filtering strategy may be
particularly useful when combined with black-box ITR methods, which generally
provide greater predictive performance than more rigid parametric methods at the
expense of opaqueness.

\reviewer{Our contributions to the ITR estimation literature are hence two-fold:
  \begin{enumerate}
    \item Provide practical guidance to applied scientists on ITR estimation in
          high-dimensional settings, and
    \item Assess the impact of TEM-VIP-based filtering on ITR estimation.
  \end{enumerate}
  A comprehensive simulation study using a variety of DGPs representative of
  randomized controlled trials (RCTs) and observational studies with continuous
  outcomes, binary treatment variables, and high-dimensional covariates is
  performed to accomplish both tasks. As part of this simulation study, ITR
  estimators, with and without TEM-VIP-based filtering, are compared on the
  basis of rule quality, interpretability, and computational efficiency.}

\section{Problem Formulation}
\label{sec:prob-form}

We begin by formally outlining the statistical problem. Consider $n$ independent
and identically distributed (i.i.d.) random vectors
$X_i = (W_i, A_i, Y_i^{(0)}, Y_i^{(1)}) \sim P_{X,0} \in \mathcal{M}_X$ for,
$i = 1, \ldots, n$, representing the full, partially-unobserved data. Indices
are omitted for convenience where possible throughout the remainder of the text.
In the full-data random vector, $X$, $W$ is a vector of $p$ pre-treatment
covariates, $A$ is a binary indicator of treatment assignment, and $Y^{(0)}$ and
$Y^{(1)}$ are random variables representing observations' potential outcomes
under assignment to the control and treatment groups, respectively
\citep{rubin1974}. We assume that $p$ is approximately equal to or larger than
$n$. Further, $Y^{(0)}$ and $Y^{(1)}$ are either continuous or binary and it is
assumed that large values of $Y$ are beneficial; this has no bearing on the
definition of the estimand, the optimal ITR. The true full-data DGP $P_{X,0}$ is
typically unknown and is a member of the nonparametric statistical model
$\mathcal{M}_X$, the collection of all possible full-data DGPs.

Only one of the potential outcomes is generally observed for any given
participant; the full data are censored by the treatment assignment mechanism.
However, this characterization of the full-data random vectors permits a
rigorous definition of the optimal ITR and related quantities.

In particular, an intermediate parameter for performing inference about the
optimal ITR is the conditional average treatment effect (CATE):
\begin{equation}\label{eq:CATE}
  \Gamma_{X,0}(W) \equiv
  \mathbb{E}_{P_{X,0}}\left[Y^{(1)}-Y^{(0)}\big| W\right] \;.
\end{equation}
This functional corresponds to the expected difference of individuals' potential
outcomes conditional on pre-treatment covariates. For any given participant with
$W=w$, a $\Gamma_{X,0}(w)$ larger than zero indicates that, on average,
participants with the same pre-treatment covariates benefit more from being
assigned to the treatment group than to the control group. The opposite is true
when $\Gamma_{X,0}(w)$ is less than zero.

With the CATE in hand, we define the optimal ITR as follows:
\begin{equation}\label{eq:opt-ITR}
  \Psi_{X,0}(W) \equiv I(\Gamma_{X,0}(W) > 0) \; .
\end{equation}
That is, the optimal ITR is an indicator function that equals one when the CATE
is larger than zero and zero otherwise. Proposition~\ref{prop:optimal-itr-mean}
follows:

\begin{prop}\label{prop:optimal-itr-mean}
  Let $\mathbf{\Psi}_{X}$ be the collection of all possible ITRs for the
  full-data model, i.e., the set of all functions from the covariate space into
  $\{0, 1\}$. Then
  \begin{equation*}
    \Psi_{X,0} = \argmax_{\Psi \in \mathbf{\Psi}_X} \;
    \mathbb{E}_{P_{X,0}}\left[
      Y^{(\Psi(W))}
    \right] \;.
  \end{equation*}
\end{prop}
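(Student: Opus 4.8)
The plan is to compute the value functional $V(\Psi) \equiv \mathbb{E}_{P_{X,0}}[Y^{(\Psi(W))}]$ explicitly in terms of the CATE and then argue that it is maximized separately at each covariate value by $\Psi_{X,0}$. First I would observe that, since $\Psi(W) \in \{0,1\}$ simply selects one of the two potential outcomes, $Y^{(\Psi(W))} = \Psi(W)\, Y^{(1)} + (1 - \Psi(W))\, Y^{(0)}$. Applying the law of iterated expectations by conditioning on $W$, and using that $\Psi(W)$ is a deterministic function of $W$ (hence constant given $W$), the inner conditional expectation becomes $\Psi(W)\,\mathbb{E}_{P_{X,0}}[Y^{(1)}\mid W] + (1-\Psi(W))\,\mathbb{E}_{P_{X,0}}[Y^{(0)}\mid W]$.

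Next I would rewrite this in terms of the CATE. Letting $\mu_0(W) \equiv \mathbb{E}_{P_{X,0}}[Y^{(0)}\mid W]$, the conditional mean rearranges to $\mu_0(W) + \Psi(W)\,\Gamma_{X,0}(W)$, since $\Gamma_{X,0}(W) = \mathbb{E}_{P_{X,0}}[Y^{(1)} - Y^{(0)}\mid W]$ by~(\ref{eq:CATE}). Taking the outer expectation over $W$ then yields $V(\Psi) = \mathbb{E}_{P_{X,0}}[\mu_0(W)] + \mathbb{E}_{P_{X,0}}[\Psi(W)\,\Gamma_{X,0}(W)]$, where splitting the expectation is justified under the standard assumption that the potential outcomes are integrable. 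The first summand does not depend on $\Psi$, so maximizing $V$ over $\mathbf{\Psi}_X$ is equivalent to maximizing $\mathbb{E}_{P_{X,0}}[\Psi(W)\,\Gamma_{X,0}(W)]$.

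The crux is the pointwise optimization. Because $\mathbf{\Psi}_X$ is the set of \emph{all} functions from the covariate space into $\{0,1\}$, the value $\Psi(w)$ may be chosen independently at each $w$, so the integrand $\Psi(w)\,\Gamma_{X,0}(w)$ can be maximized separately for each $w$. For fixed $w$, the choice $\Psi(w) \in \{0,1\}$ that maximizes $\Psi(w)\,\Gamma_{X,0}(w)$ is $\Psi(w) = 1$ when $\Gamma_{X,0}(w) > 0$ and $\Psi(w) = 0$ when $\Gamma_{X,0}(w) < 0$, which is exactly $\Psi_{X,0}(w) = I(\Gamma_{X,0}(w) > 0)$ from~(\ref{eq:opt-ITR}). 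Since this rule attains the pointwise supremum of the integrand for $P_{X,0}$-almost every $w$, it also maximizes the expectation, establishing the claim.

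The main obstacle is less a difficulty than a point of care: the boundary set $\{w : \Gamma_{X,0}(w) = 0\}$, on which $\Psi(w)\,\Gamma_{X,0}(w) = 0$ for either choice of $\Psi(w)$, so the maximizer is not unique there. The displayed equality should therefore be understood as identifying $\Psi_{X,0}$ as one representative of the argmax, with the convention $I(\Gamma_{X,0}(W) > 0)$ assigning control on the zero set; any rule differing from $\Psi_{X,0}$ only on $\{\Gamma_{X,0} = 0\}$ attains the same optimal value. If the statement is to be read as an exact equality of functions, one would additionally note that this non-uniqueness is immaterial whenever $P_{X,0}(\Gamma_{X,0}(W) = 0) = 0$.
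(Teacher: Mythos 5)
Your proof is correct and follows essentially the same route as the paper's: the identical decomposition $Y^{(\Psi(W))} = \Psi(W)Y^{(1)} + (1-\Psi(W))Y^{(0)}$, conditioning on $W$, and reduction of the objective to $\mathbb{E}_{P_{X,0}}[\Psi(W)\,\Gamma_{X,0}(W)]$. You merely make explicit two things the paper leaves implicit --- the final pointwise maximization over $\Psi(w) \in \{0,1\}$ and the non-uniqueness on the set $\{\Gamma_{X,0} = 0\}$ (which the paper only acknowledges in the remark following the proof) --- so this is a slightly more careful rendering of the same argument.
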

\begin{proof}
  \begin{align*}
    \Psi_{X,0}
    & = \argmax_{\Psi \in \mathbf{\Psi}_X}\;\mathbb{E}_{P_{X,0}}\left[Y^{(\Psi(W))}\right] \\
    & = \argmax_{\Psi \in \mathbf{\Psi}_X}\;\mathbb{E}_{P_{X,0}}\left[\mathbb{E}_{\
      P_{X,0}}\left[Y^{(1)}\Psi(W) + Y^{(0)}(1 - \Psi(W))\big | W\right]\right]\\
    & = \argmax_{\Psi \in \mathbf{\Psi}_X} \;\mathbb{E}_{P_{X,0}}\left[\Psi(W)\mathbb{E}_{P_{X,0}}\left[Y^{(1)}-Y^{(0)}\big | W\right]\right]\\
    & = \argmax_{\Psi \in \mathbf{\Psi}_X} \;\mathbb{E}_{P_{X,0}}\left[\Psi(W) \Gamma_{X,0}(W) \right].
  \end{align*}    
\end{proof}
$\Psi_{X,0}$ is therefore optimal in the sense that it maximizes the expected
outcome under an ITR from among all possible ITRs. Note too that it is not
necessarily unique.

Now, as previously mentioned, only one of the potential outcomes is observed for
a given random unit. In place of $\{X_i\}_{i=1}^n$, we observe $n$ i.i.d. random
vectors $O = (W, A, Y) \sim P_0 \in \mathcal{M}$, where $W$ and $A$ are defined
as above, and $Y = AY^{(1)} + (1-A)Y^{(0)}$. $P_0$ is defined as the unknown
observed-data DGP, and is fully determined by $P_{X,0}$. This treatment
assignment mechanism is denoted by $\pi_{0}(W) \equiv \mathbb{P}_{P_{0}}[A=1|W]$
throughout the text. Note that this treatment assignment mechanism is known in
RCTs and is unknown in observational studies. Additionally, we refer to the
conditional expected outcome $\mathbb{E}_{P_{0}}[Y|A,W]$ by $\mu_{0}(W,A)$.

Regrettably, performing inference about the parameters of
Equations~\eqref{eq:CATE} and \eqref{eq:opt-ITR} using the observed data in
place of the full data is generally not possible without making additional
assumptions. Identifiability conditions linking parameters of the observed data
DGP to the desired parameters of the full data DGP are provided in
Proposition~\ref{prop:id-cont-dgp}.

\begin{prop}\label{prop:id-cont-dgp}
  Under the assumptions of no unmeasured confounding between the treatment and
  the outcome conditional on $W$, i.e., $A \perp Y^{(a)} \mid W$ for
  $a \in \{0, 1\}$, and that the probability of receiving treatment conditional
  on $W$ is bounded away from zero, i.e., $\pi_{0}(W) > 0$:
  \begin{equation*}
    \begin{split}
      \Gamma_0(W)
      & \equiv \mu_{0}(W, 1) - \mu_{0}(W, 0) \\
      & = \Gamma_{X,0}(W) \;,
    \end{split}
  \end{equation*}
  and
  \begin{equation*}
    \begin{split}
      \Psi_0(W)
      & \equiv I(\Gamma_0(W)>0) \\
      & = \Psi_{X,0}(W) \;.
    \end{split}
  \end{equation*}
\end{prop}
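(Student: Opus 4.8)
The plan is to establish the two equalities in sequence: first prove that the observed-data CATE $\Gamma_0$ coincides with the full-data CATE $\Gamma_{X,0}$, and then deduce the equality of the optimal rules $\Psi_0 = \Psi_{X,0}$ as an immediate corollary by applying the indicator function. The entire argument reduces to showing that the arm-specific conditional means agree, i.e., that $\mu_0(W, a) = \mathbb{E}_{P_{X,0}}[Y^{(a)} \mid W]$ for each $a \in \{0, 1\}$, after which the two claims follow by subtraction and by thresholding at zero, respectively.

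First I would rewrite the observed outcome using its definition $Y = A Y^{(1)} + (1-A) Y^{(0)}$ (\emph{consistency}). Conditioning on the event $\{A = a\}$ collapses this expression to $Y = Y^{(a)}$, so that $\mu_0(W, a) = \mathbb{E}_{P_0}[Y^{(a)} \mid A = a, W]$. This step requires that the conditioning events carry positive probability, which is exactly what the \emph{positivity} assumption supplies; here I would remark that well-definedness of both $\mu_0(W, 1)$ and $\mu_0(W, 0)$ needs $0 < \pi_0(W)$ together with $\pi_0(W) < 1$, so I would read the stated overlap condition as the two-sided requirement.

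Second, I would invoke the \emph{no unmeasured confounding} assumption $A \perp Y^{(a)} \mid W$ to drop the conditioning on the treatment indicator, yielding $\mathbb{E}_{P_0}[Y^{(a)} \mid A = a, W] = \mathbb{E}_{P_{X,0}}[Y^{(a)} \mid W]$. Subtracting the $a = 0$ identity from the $a = 1$ identity and using linearity of conditional expectation then gives $\Gamma_0(W) = \mathbb{E}_{P_{X,0}}[Y^{(1)} - Y^{(0)} \mid W] = \Gamma_{X,0}(W)$, the first claim. The rule equality follows at once: since $\Gamma_0(W) = \Gamma_{X,0}(W)$ pointwise, $\Psi_0(W) = I(\Gamma_0(W) > 0) = I(\Gamma_{X,0}(W) > 0) = \Psi_{X,0}(W)$.

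There is no serious technical obstacle here; this is the classical identification of potential-outcome means via consistency and ignorability, and the sharpest point is simply the bookkeeping of which assumption does which work. Consistency converts the factual outcome into the relevant potential outcome on each arm, positivity guarantees the conditional expectations are defined on both arms, and conditional independence removes the selection induced by conditioning on $A$. I would also note explicitly that the continuous-versus-binary distinction for $Y$ plays no role, since every step holds verbatim for any integrable outcome.
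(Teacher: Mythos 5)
Your proof is correct and takes essentially the same route as the paper: the paper's proof is a single line asserting that $\mu_{0}(W,a) = \mathbb{E}_{P_{X,0}}[Y^{(a)}\mid W]$ for $a \in \{0,1\}$ under the stated assumptions, which is precisely the identity you derive explicitly via consistency, positivity, and conditional ignorability, with both claims then following by subtraction and by thresholding at zero. Your side remark that well-definedness of $\mu_{0}(W,0)$ in fact requires the two-sided condition $0 < \pi_{0}(W) < 1$, rather than only $\pi_{0}(W) > 0$ as literally stated, is a correct and worthwhile tightening of the proposition's positivity assumption.
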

\begin{proof}
  This result follows immediately from the fact that
  $\mu_{0}(W,a) = \mathbb{E}_{P_{X,0}}[Y^{(a)}|W]$ for $a \in \{0,1\}$ under the
  stated assumptions.
\end{proof}

Proposition~\ref{prop:id-cont-dgp} establishes that the optimal ITR is estimable
from the observed data in the continuous and binary outcome setting. Its
assumptions are satisfied in RCTs: randomization ensures that (1) there are no
treatment--outcome confounders, (2) the treatment assignment mechanism is known,
and (3) the probability of receiving treatment, conditional on covariates, is
generally bounded away from zero. In \reviewer{observational studies accounting
  for confounders}, these assumptions are such that the data may be considered
as if generated by a \reviewer{stratified} RCT but with an unknown treatment
assignment mechanism.

\reviewer{We note that, when the confounding variables and treatment effect
  modifiers are known a priori, they can be used to define the CATE and the
  optimal ITR instead of the entire pre-treatment covariates vector $W$. It is
  straightforward to show that a result analogous to
  Proposition~\ref{prop:id-cont-dgp} can be obtained using this (possibly)
  reduced collection of variables. When identifiability conditions are
  satisfied, the CATE only depends on the set of treatment effect modifiers. In
  practice, however, the relevant subset of pre-treatment covariates is
  generally unknown. Using numerous pre-treatment covariates is therefore
  encouraged to ensure that Proposition~\ref{prop:id-cont-dgp} is satisfied and
  that many treatment effect modifiers are considered.}

Assuming Proposition~\ref{prop:id-cont-dgp} is satisfied, we take as target of
inference the optimal ITR, $\Psi_{0}(W)$, which, as a preliminary step, requires
the estimation of the CATE, $\Gamma_{0}(W)$. Possible estimators are presented
and discussed in Section~\ref{sec:methods}.

\section{Methods}
\label{sec:methods}

Myriad methods have been developed for estimating the CATE and therefore the
optimal ITR. We review a subset of these approaches capable of performing
inference about the latter in the high-dimensional, binary treatment setting. We
begin with the simplest parametric approach and end with state-of-the-art
nonparametric methods. Throughout, we represent estimators of
$\Gamma_{0}(W), \Psi_{0}(W), \mu_{0}(W,A)$, and $\pi_{0}(W)$ fit using the
empirical distribution $P_{n}$ by $\Gamma_{n}(W), \Psi_{n}(W), \mu_{n}(W,A)$,
and $\pi_{n}(W)$, respectively.

\subsection{Plug-In Estimators}
\label{subsec:parametric-methods}

The most straightforward approach for estimating $\Gamma_0$ is to assume that
the conditional expected outcome admits a parametric form, like that of a
generalized linear model (GLM). For example, we might assume that
$\mu_0(W, A) = l^{-1}(\alpha_{0} + \beta_{0} A + \gamma_{0}^\top W + \delta_{0}^\top WA)$
for some link function $l(\cdot)$. \reviewer{Here, $\alpha_{0}$ and $\beta_{0}$
  are intercept terms, and $\gamma_0$ and $\delta_{0}$ are $p$-length vectors.}
Letting the outcome be continuous and using the identity link, $l(x) = x$,
$\mu_{0}(W,A)$ could be estimated using penalized regression methods, like the
LASSO \citep{tibshirani1996} or the elastic net \citep{zou2005}, to produce
coefficient estimates $\alpha_n, \beta_{n}, \gamma_{n},$ and $ \delta_{n}$.
\reviewer{Note that, when the pre-treatment covariates are centered such that
  $\mathbb{E}_{P_{0}}[W] = 0$, $\alpha_{0} = \mathbb{E}_{P_{X,0}}[Y^{(0)}]$ and
  $\beta_{0} = \mathbb{E}_{P_{X,0}}[Y^{(1)}-Y^{(0)}]$ --- the average treatment
  effect --- under the identity link.}

The plug-in CATE estimator under this linear model is given by

\begin{equation}\label{eq:lm-cate}
  \begin{split}
    \Gamma_n(W)
    & \equiv \mu_{n}(W, 1) - \mu_{n}(W, 0) \\
    & = \beta_n + \delta_{n}^\top W \;.
  \end{split}
\end{equation}
This plug-in estimator can then be used to construct an optimal ITR estimator
under this linear model:
\begin{equation}\label{eq:lm-itr}
  \Psi_n(W) \equiv \text{I}(\Gamma_n(W) > 0).
\end{equation}
Analogous estimators can be constructed for scenarios with binary outcomes using
penalized logistic regression to estimate the conditional expected outcome.

While computationally efficient implementations can be painlessly constructed in
standard software, like the \texttt{R} language and environment for statistical
computing \citep{rstats}, and resulting estimators are generally interpretable,
this approach may be hampered by its reliance on strong parametric assumptions
about the functional form of the CATE. When the parametric form of the
conditional expected outcome is misspecified, the estimator of
Equation~\eqref{eq:lm-cate} could be biased, and so too could be the
corresponding ITR estimator of Equation~\eqref{eq:lm-itr}. Further, when the
covariates are correlated, the covariates classified as TEMs by penalized
regression methods like the LASSO---variables whose corresponding treatment
interaction coefficient estimates in $\delta_{n}$ are non-zero---may be
unreliable \citep{zhao2006}.

We might instead estimate $\mu_0(W, A)$ using more flexible machine learning
procedures, like Random Forests \citep{breiman2001} or XGBoost \citep{chen2016}.
In fact, $\mu_{0}(W,0)$ and $\mu_{0}(W,1)$ can even be estimated individually on
observations in control and treatment conditions, respectively, using possibly
different estimators. Estimators of the CATE and the accompanying optimal ITR
are then constructed similarly to those of Equations~\eqref{eq:lm-cate} and
\eqref{eq:lm-itr}.

These flexible plug-in estimators share similar limitations to their parametric
counterparts, however: if the estimators of $\mu_{0}(W,0)$ and $\mu_{0}(W,1)$
are not consistent, then neither will their associated CATE estimator. If
estimated using black-box methods, then they may also be less interpretable and
computationally efficient than plug-in estimators obtained from parametric
modeling procedures.

\subsection{(Augmented) Modified Covariates Estimators}
\label{subsec:modified-cov}

In part motivated by the limitations of the parametric plug-in estimator,
\citep{tian2014} developed a semiparametric framework for estimating the CATE in
RCTs. Assume for now that the outcome is a continuous variable and let
\begin{equation}\label{eq:mod-out-model}
  \mu_0(W, A) = \gamma_{0}^\top f(W) + \delta_{0}^\top h(W) \frac{2A-1}{2} \;,
\end{equation}
where $f$ and $h$ are some functions of the covariates. For simplicity of
presentation, we let $f(W) = h(W) = W$ \reviewer{such that $\gamma_{0}$ and
  $\delta_{0}$ are $p$-length coefficient vectors}. It follows that
$\Gamma_{0}(W) = \mathbb{E}_{P_0}[2Y(2A-1) | W] = \delta_{0}^\top W$. Dubbed the
\textit{modified outcome} method, this approach permits a straightforward
estimation of the CATE using standard ordinary least squares or penalized
methods like the LASSO and elastic net. As \citep{tian2014} note, however, this
framework does not easily generalize to DGPs with other kinds of outcome
variables.

They instead propose the more general \textit{modified covariates} method, which
relies on GLMs or the proportional hazards model. Again, first assuming a
continuous outcome and positing the following working model,
\begin{equation}\label{eq:mod-cov-model-cont}
  \mu_{0}(W,A) = \delta_{0}^\top h(W) \frac{2A-1}{2} \;,
\end{equation}
\citet{tian2014} show that the $\delta_{0}$s of
Equations~\eqref{eq:mod-out-model} and~\eqref{eq:mod-cov-model-cont} share
identical interpretations and can be estimated in the same manner. We again let
$h(W) = W$. The resulting CATE estimator in this model therefore takes the form
of
\begin{equation*}
  \Gamma_n(W) \equiv \delta_{n}^\top W \;,
\end{equation*}
where $\delta_{n}$ is again estimated via (penalized) linear methods. The
accompanying optimal ITR estimator take the same form as the estimator of
Equation~\eqref{eq:lm-itr}.

GLMs can be used to apply this approach to other kinds of outcomes. When $Y$ is
binary, the conditional outcome regression can be represented by the following
logistic regression model:
\begin{equation}\label{eq:mod-cov-model-bin}
  \mu_0(W,A) = \frac{\exp(\delta_{0}^\top h(W) (2A-1)/2)}
   {1 + \exp(\delta_{0}^\top h(W) (2A-1)/2)} \;.
\end{equation}
When this working model is well-specified,
\begin{equation*}
  \Gamma_0(W)=\frac{\exp(\delta_{0}^\top h(W)/2)-1}
  {\exp(\delta_{0}^\top h(W)/2)+1}.
\end{equation*}
Estimates of $\delta_{0}$ are again obtained by fitting the working model for
the conditional expected outcome using standard (penalized) regression
approaches. An optimal ITR estimator under this model is constructed as in
Equation~\eqref{eq:lm-itr}.

While the modified covariates framework is more flexible than the parametric
approaches based on GLMs, it is generally inefficient. \citet{tian2014}
therefore proposed an ``augmented'' modified covariates framework that can be
used to construct CATE estimators with generally smaller variance but which are
asymptotically equivalent. In the continuous outcome scenario with $h(W) = W$,
the resulting CATE estimator is equivalent to $\Gamma_n(W)$ of
Equation~\eqref{eq:lm-cate}. This simplicity is not shared with non-linear
working models like that of Equation~\eqref{eq:mod-cov-model-bin}, however
\citep{tian2014}.

\citet{chen2017} generalized the (augmented) modified covariates methodology
through a loss-based estimation procedure relying on loss functions
incorporating the propensity score. For a continuous outcome, the non-augmented
CATE estimator is defined as
\begin{equation*}
  \Gamma_{n}(W) \equiv
  \argmin_{\Gamma \in \boldsymbol{\Gamma}_{n}} \;
  \sum_{i=1}^{n}
  \frac{(Y_{i} - (2A_{i}-1)\Gamma(W_{i}))^{2}}
    {(2A_{i}-1)\pi_{0}(W_{i}) + 1-A_{i}} \;,
\end{equation*}
where $\boldsymbol{\Gamma}_{n}$ is the set of possible CATE estimators and
$\pi_{0}(W)$ must be replaced by $\pi_{n}(W)$ in observational studies. This
generalized framework also permits more flexible estimation of $\mu_{0}(W,A)$
using machine learning methods like with XGBoost. More efficient estimation of
the CATE and the ITR is therefore made possible using this modified covariates
model in observational studies, assuming that $\mu_{n}(W,A)$ and $\pi_{n}(W)$
are consistent. Readers interested in the augmented version of this weighted
CATE estimator are invited to review \citet{chen2017}.

Like the parametric plug-in estimators, these approaches result in biased CATE
estimators when their models are misspecified. This may result in a biased ITR
estimator, too. When the difference in expected outcomes is estimated using more
complex methods, like XGBoost, or penalized GLMs are used but covariates are
correlated \citep{zhao2006}, the interpretability of the resulting ITR estimate
is reduced, too.

A computationally efficient implementations of this methodology is made
available in the \texttt{personalized} \texttt{R} package \citep{huling2021}. We
use it throughout the simulation studies in Sections~\ref{sec:simulations}.

\subsection{Nonparametric Approaches}
\label{subsec:np-methods}

Much work in the nonparametric estimation literature has been dedicated to
developing estimators of the CATE and the optimal ITR; see \citet{robins2004,
  robins2008, zhang2012a, zhang2012b, vdl2015, luedtke2016, zhao2021,
  bahamyirou2022, kennedy2022optimal}. These estimators, typically relying on
nuisance parameters like $\mu_{0}(W,A)$ and $\pi_{0}(W)$, are generally
doubly-robust. That is, these ITR estimators are consistent estimators under
mild regularity conditions and so long as either $\mu_{n}(W,A)$ or $\pi_{n}(W)$
is consistent. Given that the treatment assignment rule $\pi_{0}(W)$ is
typically known in a clinical trial, these nonparametric estimators are
guaranteed to be consistent in these settings.

\reviewer{\subsubsection{Estimators Based on the Augmented Inverse Probability
    Weighted Transform}}

We briefly present the approach inspired by \citet{luedtke2016}. It relies on
the Augmented Inverse Probability Weighted (AIPW) transform, defined as
\begin{equation*}
    T(O; \mu_{n}, \pi_{n}) = \frac{2A - 1}{A\pi_{n}(W) + (1-A)(1 - \pi_{n}(W))}
      (Y-\mu_{n}(W,A)) + \mu_{n}(W,1) - \mu_{n}(W,0)\;,
\end{equation*}
and also referred to as the \textit{pseudo-outcome difference}.

A nonparametric, ``AIPW-based'' ITR estimator can then be defined as follows:
\begin{equation*}
  \Gamma_{n}(W) \equiv \argmin_{\Gamma \in \boldsymbol{\Gamma}_{n}}\;
  \sum_{i=1}^{n}\left(\Gamma(W_i) - T(O_i; \mu_{n}, \pi_{n})\right)^{2} \;.
\end{equation*}
That is, $\Gamma_n(W)$ is the squared AIPW-transform error risk minimizer from
among the set of possible CATE estimators. $\Gamma_n(W)$ can be constructed by
regressing the pseudo-outcome differences given by $T(O;\mu_{n},\pi_{n})$ on the
covariates $W$. This estimator is a consistent estimator of $\Gamma_0(W)$ when
either $\mu_{n}(W,A)$ or $\pi_{n}(W)$ are consistent \citep{luedtke2016}. The
accompanying optimal ITR estimator is constructed similarly to the ITR
estimators of Equation~\eqref{eq:lm-itr}.

This nonparametric approach makes minimal assumptions about the DGP and is made
all the more attractive by the possible use of flexible machine learning
algorithms to both estimate nuisance parameters and the CATE, curbing the risk
of model misspecification and potentially translating to increased finite sample
precision. We recommend using an estimator based on the Super Learner framework
of \citet{laan2007}. Building upon the theory of cross-validated loss-based
estimation \citep{laan-dudoit2003a, laan-dudoit2003b, dudoit2005asymptotics,
  vaart2006}, a Super Learner constructs a convex combination of estimators from
a pre-specified library that minimizes the cross-validated risk of a pre-defined
loss. This collection of estimators can be made up of modern machine learning
algorithms, avoiding the need to make strong parametric assumptions about the
target parameter. The resulting estimator converges in probability to the oracle
asymptotically. The oracle corresponds to the estimator that would be selected
for the given dataset if $P_{0}$ were in fact known. The Super Learner is
conveniently implemented in the \texttt{sl3} R package \citep{coyle2021sl3},
which we use in the subsequent simulation study.

This flexibility of this estimation procedure comes at the cost of computational
efficiency, however. Whether the aforementioned asymptotic properties lead to
improved performance in the high-dimensional settings considered here remains
uncertain, too.

The resulting CATE estimator's level of interpretability also largely depends on
the method used to regress pseudo-outcome differences on the covariates. A
compromise between flexibility and finite sample precision might be achieved by
fitting the pseudo-outcome differences as a function of the covariates using
penalized GLMs \citep{zhao2021,bahamyirou2022}. If interpretability is not a
concern, then a Super Learner composed of flexible machine learning algorithms
will likely provide the best finite sample performance.

\reviewer{\subsubsection{Causal Random Forests}}

An alternative but related nonparametric approach is that of Causal Random
Forests \citep{wager2018,athey2019}. Like the Random Forests algorithm proposed
by \citet{breiman2001}, a large number of decision trees are constructed on
random subsamples of the data. Whereas traditional Random Forests generate
branches in any given tree by maximizing the variance of the outcome within
subgroups formed across random selections of covariates, the causal version
attempts to maximize the variance of the estimated treatment effects. The CATE
is then constructed using a weighted Robinson's residual-on-residual regression
\citep{robinson1988}. This methodology is implemented in the \texttt{grf}
\texttt{R} package \citep{grf}, which we use in the simulation study of
Section~\ref{sec:simulations}.

Similar to the previously described nonparametric CATE estimator, $\mu_{n}(W,A)$
and $\pi_{n}(W)$ must be fit. This is typically done using traditional Random
Forests. If either estimator is consistent, then so too is the CATE estimator
given by the Causal Random Forests under mild conditions on the DGP. Reliably
recovering TEMs is however not possible in high dimensions.

\subsection{Feature Selection using Treatment Effect Modifier Variable Importance
  Parameters}
\label{subsec:feature-selection}

Recent work by \citet{boileau2022a,boileau2023} discussed and demonstrated how
CATE-based methods aiming to uncover TEMs in high-dimensional RCTs and
observational studies generally fail to reliably uncover true TEMs. This has
been reported elsewhere as well \citep[for example, see the simulation studies
of ][]{tian2014,bahamyirou2022}. This suggests that the previously reviewed CATE
estimators may needlessly incorporate uninformative covariates in their
estimation procedures, potentially decreasing their rule quality and diminishing
their interpretability.

Motivated by the need for reliable TEM discovery methodology applicable to
high-dimensional data, \citet{boileau2022a, boileau2023} proposed a
nonparametric framework for defining TEM-VIPs based on any pathwise
differentiable treatment effect and for subsequently constructing regular
asymptotically linear (RAL) estimators of these parameters. This framework
provides formal statistical guarantees, like Type I error rate control.

Of particular relevance is the absolute TEM-VIP based on the average treatment
effect (ATE), and therefore related to the CATE. Assuming that the expected
value of $\mu_{0}(W,1)-\mu_{0}(W,0)$ conditional on any given $W_{j}$ is linear
in $W_{j}$, this TEM-VIP is defined as the simple linear regression coefficient
obtained by regressing the difference in conditional expected potential outcomes
on a single covariate. For the $j^{\text{th}}$ pre-treatment covariate $W_{j}$,
this parameter is given by
\begin{equation*}
  \frac{\text{Cov}_{P_{0}}\left[
      \mu_{0}(W, 1) - \mu_{0}(W, 0), W_{j}
    \right]}{\text{Var}_{P_{0}}[W_{j}]} .
\end{equation*}
Though the true relationship between this expected outcome difference and the
covariate conditional on said covariate is unlikely to be linear, this parameter
is an informative measure of treatment effect modification: any value greater
than zero is indicative of effect modification, and the magnitude and sign of
the TEM-VIP summarize, respectively, the strength and direction of the effect.

\citet{boileau2022a, boileau2023} provide one-step, estimating equation, and
targeted maximum likelihood (TML) estimators of this TEM-VIP that possess
$\mu_{0}(W,A)$ and $\pi_{0}(W)$ as nuisance parameters, and prove that these
estimators are doubly-robust and efficient in nonparametric models. Assuming
$\mu_{n}(W,A)$ and $\pi_{n}(W)$ converge to $\mu_{0}(W,A)$ and $\pi_{0}(W)$,
respectively, at the appropriate nonparametric rate, $o_{P}(n^{-1/4})$, these
TEM-VIP estimators are shown to be RAL. In RCTs, these estimators are guaranteed
to be RAL by virtue of $\pi_{0}(W)$ being known. Recall that RAL estimators'
sampling distributions are asymptotically Normal, permitting computationally
efficient hypothesis testing. Inference procedures for this TEM-VIP are
implemented in the \texttt{unihtee} \texttt{R} package. We use this software in
the simulation studies.

We propose to use these TEM-VIP estimators in a two-stage CATE estimation
procedure. First, the TEM-VIPs are estimated for each covariate, and a
hypothesis test about their modification status is performed. Covariates
classified as TEMs based on a predefined null hypothesis and level of
significance are then used to estimate the CATE.

As previously mentioned, decreasing the number of covariates used by CATE
estimators in the second stage may decrease said estimators' variance and bias.
The two-stage procedure may also be more computationally efficient than fitting
an estimator using the entire set of covariates. Ranking the estimated TEM-VIPs
also adds a layer of interpretability to the resulting CATE estimate, regardless
of the CATE estimation strategy used in the second stage.

\section{Simulation Study}
\label{sec:simulations}

\subsection{Data-Generating Processes and Simulation Details}

Recall that, in a continuous outcome setting, we have access to realizations of
$O = (W, A, Y)$, where $W$ is a $p$-dimensional vector of pre-treatment
covariates (and possible confounders), $A$ is a binary treatment indicator, and
$Y$ is the observed continuous outcome. We consider generative models based on
the template below:
\begin{align*}
  W & \sim N(0, \Sigma) \\
  A|W & \sim \text{Bernoulli}(\pi(W)) \\
  Y^{(A)}|W & \sim N(\mu(W,A), 1) \\
  Y & = AY^{(1)} + (1-A)Y^{(0)} \;.
\end{align*}
Here, $\Sigma$ is some $p \times p$ covariance matrix and $\pi(W)$ and
$\mu(W,A)$ are generic propensity score and conditional expected outcome
functions, respectively.

Setting $p=500$, we define 16 DGPs using every possible combination of the
following factors:
\begin{equation*}
\resizebox{\textwidth}{!}{$%
  \begin{split}
    \Sigma_{1} & = I_{500 \times 500} \\
    \Sigma_{2} & = \text{Block diagonal} \\
  \end{split}
  \quad \times \quad
  \begin{split}
    \pi_{1}(W) & = \frac{1}{2} \\
    \pi_{2}(W) & = \text{logit}^{-1}\left(\frac{W_{1} + W_{2} + W_{3} + W_{4}}{5}\right) \\
  \end{split}
  \quad\times\quad
  \begin{split}
    \mu_{1}(A, W) & = A + \gamma^{\top}W + (\delta^{(10)})^{\top}WA \\
    \mu_{2}(A, W) & = A + \gamma^{\top}W + (\delta^{(50)})^{\top}WA \\
    \mu_{3}(A, W) & = \gamma^{\top}W + 2\;\text{arctan}\left\{(\delta^{(10)})^{\top}WA\right\} \\
    \mu_{4}(A, W) & = \gamma^{\top}W + 2\;\text{arctan}\left\{(\delta^{(50)})^{\top}WA\right\} \\
  \end{split}$%
  }
\end{equation*}
where $\gamma_{1} = \ldots = \gamma_{5} = 2$,
$\gamma_{6} = \ldots = \gamma_{500} = 0$,
$\delta^{(10)}_{1} = \ldots = \delta^{(10)}_{10} = 2$,
$\delta^{(10)}_{11} = \ldots = \delta^{(10)}_{500} = 0$,
$\delta^{(50)}_{1} = \ldots = \delta^{(50)}_{50} = 1/2$, and
$\delta^{(50)}_{51} = \ldots = \delta^{(50)}_{500} = 0$. $\Sigma_{2}$ is
constructed by randomly generating $50$ positive definite square symmetric
matrices with diagonal elements equal to one. Details on the generation of these
matrices are provided in the accompanying code.

We note that DGPs using $\pi_{1}$ are treated as observational studies in that
the treatment assignment mechanism is unknown. DGPs with $\pi_{2}$ mimic an RCT,
where $\pi_{2}$ is treated as known. We also highlight that $\delta^{(10)}$ and
$\delta^{(50)}$ permit the evaluation of methods in DGPs where the number of
TEMs is sparse and non-sparse, respectively. The effect sizes associated with
each TEM also vary in terms of the conditional expected outcome function:
$\mu_{1}$ produces TEMs with the largest effect sizes, followed by $\mu_{3}$,
$\mu_{2}$, and $\mu_{4}$ in decreasing order. Finally, DGPs using $\mu_{1}$ and
$\mu_{2}$ rely on linear models to produce the outcome, whereas DGPs using
$\mu_{3}$ and $\mu_{4}$ use non-linear models to generate outcomes as a
function of treatment assignment and pre-treatment covariates.

One hundred learning datasets made up of $n=250, 500$ and $1,\!000$ observations
are generated for each DGP. The potential outcomes $Y^{(0)}$ and $Y^{(1)}$ are
unobserved in these datasets. Accompanying each learning set is a test set made
up of $n^{\prime} = 100$ observations. These data contain the potential outcomes
for evaluation purposes. The learning datasets are used to fit the considered
CATE and ITR estimators which are subsequently assessed using the test sets.
\reviewer{The size of the learning and test datasets were chosen to resemble the
  sample sizes typically observed in Phase 2 and 3 oncology trials, as well
  those of observational studies investigating ITRs in moderate-to-small patient
  populations. Additionally, the considered learning set sample sizes produce
  datasets with a variety of $p/n$ ratios, offering insights into the
  benchmarked methods' performance across many high-dimensional regimes.}

\subsection{Estimators}

The CATE-based ITR estimators considered in this simulation study are listed in
Table~\ref{table:CATE-estimators}. \reviewer{Owing to the many possible
  implementations of the estimation approaches discussed in
  Section~\ref{sec:methods}, we restrict our benchmarks to ITR estimators
  relying on familiar nuisance parameter estimators capable of handling
  high-dimensional data that are easily implemented with standard statistical
  software. While an exhaustive benchmarking of ITR estimators is infeasible, we
  believe that this analysis provides valuable insights into ITR estimation in
  high dimensions. Additionally, this simulation study provides a foundation for
  bespoke investigations.}

Details on the ITR estimators' fitting procedures are provided in
Table~\ref{table:CATE-estimators}, as is a column indicating whether they
provide built-in TEM classification. \reviewer{That is, whether the CATE
  estimates underlying the ITRs explicitly distinguish between predicted TEMs
  and non-TEMs}. Among the considered estimators, only those relying solely on
the LASSO to estimate the conditional expected outcome possess this ability:
covariates with non-zero estimated treatment-covariate interaction coefficients
are categorized as TEMs\reviewer{, others as non-TEMs. We emphasize that while
  the LASSO-based ITR estimators provide built-in TEM classification, the
  resulting classifications may produce many false positives and negatives.
  Built-in TEM classification may be misleading and could therefore be
  considered an undesirable property in settings where the LASSO's assumptions
  are violated}. Note that all methods employing the LASSO select the
regularization hyperparameter by minimizing the 10-fold cross-validated mean
squared error.

\begin{table}[ht!]
  \resizebox{\textwidth}{!}{%
  \centering
  \begin{tabular}{
    |p{0.3\textwidth} || p{0.6\textwidth} | p{0.1\textwidth} |
    }
    \hline
    CATE Estimator
    & Details
    & Built-in TEM Classification \\
    \hline\hline
    Plug-In LASSO
    & A plug-in estimator, where $\mu(W,A)$ is estimated using the LASSO. Linear
      model coefficients for the treatment indicator, the pre-treatment covariates, and the
      treatment-covariates interactions are considered.
    & Yes \\
    Plug-In XGBoost
    & A plug-in estimator, where XGBoost \citep{chen2016} is used to estimate $\mu(W,A)$.
    & No \\
    Modified Covariates LASSO
    & A modified covariates estimator, where $\mu(W,A)$ is estimated with a LASSO linear regression. $\pi(W)$ is estimated in non-randomized DGPs with a logistic LASSO regression.
    & Yes \\
    Modified Covariates XGBoost
    & A modified covariates estimator, where $\mu(W,A)$ is estimated with XGBoost. $\pi(W)$ is
      estimated in non-randomized DGPs with a logistic LASSO regression.
    & No \\
    Augmented Modified Covariates LASSO
    & An augmented modified covariates estimator, where $\mu(W,A)$ is estimated with a LASSO linear regression. $\pi(W)$ is estimated in non-randomized DGPs with a logistic LASSO regression.
    & Yes \\
    Augmented Modified Covariates XGBoost
    & An augmented modified covariates estimator, where $\mu(W,A)$ is estimated with XGBoost. $\pi(W)$ is estimated in non-randomized DGPs with a logistic LASSO regression.
    & No \\
    AIPW-based LASSO
    & An AIPW-based estimator, where a Super Learner comprised of penalized linear regression (LASSO, ridge, elastic net), Random Forests, and XGBoost is used to estimate $\mu(W,A)$.
      A Super Learner comprised of penalized logistic regressions (LASSO, ridge,
      elastic net), Random Forests, and XGBoost is used to estimate $\pi(W)$
      when necessary. The difference in predicted pseudo-outcomes is regressed
      on the covariates using a LASSO linear regression.
    & Yes \\
    AIPW-based Super Learner
    & An AIPW-based estimator, where a Super Learner comprised of penalized linear regression (LASSO, ridge,
      elastic net), Random Forests, and XGBoost is used to estimate $\mu(W,A)$.
      A Super Learner comprised of penalized logistic regressions (LASSO, ridge,
      elastic net), Random Forests, and XGBoost is used to estimate $\pi(W)$
      when necessary. The difference in predicted pseudo-outcomes is regressed
      on the covariates using a Super Learner identical to that used to estimate
      the conditional expected outcomes.
    & No \\
    Causal Random Forests
    & A Causal Random Forests estimator, where Random Forests are used to estimate $\mu(W,A)$ and, when necessary, $\pi(W)$.
    & No \\
    \hline
  \end{tabular}}
  \caption{CATE-based ITR estimators benchmarked in simulation study.}
  \label{table:CATE-estimators}
\end{table}

A TEM-VIP-based filtered version of each estimator outlined in
Table~\ref{table:CATE-estimators} is also included in this benchmark. As
previously described in Section~\ref{sec:methods}, inference about TEM-VIPs is
performed to identify TEMs in each learning dataset replicate. These predicted
TEMs are then used to train the CATE-based ITR estimators.

We use the nonparametric doubly-robust one-step estimator to perform inference
about the absolute ATE-based TEM-VIPs \citep{boileau2022a,boileau2023}. Recall
that in observational studies, this filtering procedure requires that two
nuisance parameters be estimated: $\mu(W,A)$ and $\pi(W)$. They are fit in the
observational study DGPs considered here with the same Super Learners used by
the AIPW-based estimators listed in Table~\ref{table:CATE-estimators}. In the
RCT-like DGPs, $\pi(W)$ need not be estimated, and $\mu(W,A)$ is estimated with
a LASSO regression that includes main terms for the treatment and covariates, as
well as all possible treatment-covariate interaction terms. Again, regardless of
whether a DGP mimics an observational study or an RCT, this one-step estimator
is asymptotically Normal under mild conditions. Wald-type confidence intervals
are constructed for each pre-treatment covariate's TEM-VIP, and any variable
with an Benjamini-Hochberg FDR-adjusted $p$-value \citep{benjamini1995} lower
than $5\%$ is classified as a TEM.

\subsection{Estimator Performance Metrics}

Estimators are evaluated using three metrics: quality of the resulting ITR-based
classification on new data, accurate interpretability, and computational
efficiency.

\paragraph{Rule Quality:} For a generic ITR estimate $\psi_{b}(W)$ produced by
fitting a generic ITR estimator $\Psi_{n}(W)$ to the $b^{\text{th}}$ learning
dataset, we compute the mean outcome over the $b^{\text{th}}$ test set based on
the resulting estimate's classification as
\begin{equation*}
  \frac{1}{n^{\prime}} \sum_{i=1}^{n^{\prime}} Y_{i}^{(\psi_{b}(W_{i}))} \;,
\end{equation*}
where, recall, each test set is made up of $n^{\prime}=100$ observations.
$\Psi_{n}(W)$'s performance across the $B=100$ test set replicates is then
summarized as
\begin{equation*}
  \frac{1}{B} \sum_{b=1}^{B} \left(
    \frac{1}{n^{\prime}} \sum_{i=1}^{n^{\prime}} Y_{i}^{(\psi_{b}(W_{i}))}
  \right) \;.
\end{equation*}
This metric, corresponding to the empirical mean of the potential outcomes under
the ITR $\Psi_{n}(W)$, is therefore an estimator of
$\mathbb{E}_{P_{X,0}}[Y^{(\Psi_{X,0}(W))}]$, assuming the conditions of
Proposition~\ref{prop:id-cont-dgp} are satisfied. \reviewer{ITR estimators that
  produce the largest value are said to be ``high-quality''---their treatment
  assignments empirically optimize, relative to all considered ITRs, the
  observations' average outcomes.}

\paragraph{\reviewer{Accurate Interpretability}:} The non-zero entries in
$\delta^{(10)}$ and $\delta^{(50)}$ correspond to the TEMs of their respective
DGPs, and are therefore the driving force behind heterogeneous treatment
effects. For any given learning set replicate, the false discovery, true
negative, and true positive proportions with respect to TEM status are computed
for all estimators capable of classifying pre-treatment covariates. These
proportions are then averaged across replicates to produce the empirical false
discovery rate (FDR), true negative rate (TNR), and true positive rate (TPR).
ITR estimators are deemed accurately interpretable if the empirical false
discovery rate (FDR) approximately achieves the nominal Type I error rate of
$5\%$ while producing TNRs and TPRs near 100\%.

\paragraph{Computational Efficiency:} The time to fit each estimator, using a
single core, to each learning dataset is recorded and considered a surrogate for
overall computational efficiency. The mean fit time is then computed across
sample sizes and DGPs. Smaller times indicate improved computational efficiency.
We note that some of the estimators implementations permit parallelization,
which generally decreases the required time. We nevertheless fit all estimators
in serial to ensure a comparison that is relevant and informative for an
audience with access to a diverse set of computing environments.

\subsection{Results}

\reviewer{Due to the many DGPs and estimators considered, a representative
  subset of results is presented in Figure~\ref{fig:results-summary}. A
  comprehensive report of all estimators' performance with respect to the
  previously introduced metrics are provided in Tables~A1-A16 of the Appendix,
  stratified by DGP. These results are also illustrated in
  Figures~\ref{fig:expected-outcome}, \ref{fig:fdr}, \ref{fig:tpr},
  \ref{fig:tnr}, and \ref{fig:mean-fit-time}. We summarize our findings below.}

\begin{figure}
  \centering
  \includegraphics[width=\textwidth]{graphs/summary-plot}
  \caption{\textit{Simulation Study Results Summary}: The results of select
    estimators in the simulated observational study with sparse non-linear
    conditional expected outcome and a block covariance matrix are presented.
    These results are representative of the general findings of the simulation
    study. \textbf{(A)} The relative rule quality computed over the 100 test set
    replicates. Rule quality is defined as the mean ITR outcome divided by the
    optimal ITR, given in Proposition~\ref{prop:optimal-itr-mean}, which is
    approximated using a Monte Carlo procedure for each DGP. The dotted line
    corresponds to the idealized relative rule quality. \textbf{(B)} The mean
    fit time in seconds computed over the 100 learning set replicates. Note the
    y-axis's log scale. \textbf{(C)} The empirical FDR computed over the 100
    learning set replicates. The dotted line corresponds to desired nominal Type
    I error rate of $5\%$. \textbf{(D)}. The empirical TPR computed over the 100
    learning set replicates. The dotted line corresponds to the desired TPR of
    $100\%$.}
  \label{fig:results-summary}
\end{figure}

\paragraph{Rule Quality:} No estimator uniformly dominates the others
(Figure~\ref{fig:expected-outcome}). In the DGPs with sparse TEMs---regardless
of whether treatment is randomized or of the covariates' correlation
structure---the filtered and non-filtered versions of the LASSO-based plug-in,
AIPW-based, and augmented modified covariates estimators are the top performers
(Figure~\ref{fig:results-summary}A). These estimators produce empirical means
that are approximately equal to the population mean under the optimal ITR in the
linear model DGPs at all sample sizes and in the non-linear DGPs when
$n=1,\!000$. The optimal ITR is defined in
Proposition~\ref{prop:optimal-itr-mean}, which is approximated using a Monte
Carlo procedure for each DGP. When $n=250$, the non-filtered versions of these
estimators generally perform marginally better than their filtered counterparts.
The reverse is true as sample size increases.

In the non-sparse settings, the non-filtered plug-in LASSO, AIPW-based, and
LASSO-based augmented modified covariates estimators produce the best empirical
mean outcomes on average (Figures~\ref{fig:expected-outcome}). The empirical
means obtained under these estimators are near the population mean under the
optimal ITRs as sample size increases. The performance of the filtered versions
of these estimators' generally converges with increasing sample size to that of
their non-filtered counterparts. An exception to this trend is observed in the
DGPs with non-linear conditional expected outcomes and whose covariates'
covariance matrix is $I_{500\times 500}$.

The XGBoost-based plug-in, modified covariates, and Causal Random Forest
estimators are generally of lower rule quality than other estimators considered
here, though their performance is improved by the covariate filtering procedure.

\paragraph{Accurate Interpretability:} As illustrated in \citet{boileau2022a,
  boileau2023}, only ITR estimators employing the TEM-VIP-based filtering
procedure reliably control the FDR near the nominal $5\%$ level regardless of
the treatment assignment mechanism (Figures~\ref{fig:results-summary}C and
\ref{fig:fdr}). Indeed, these estimators produce empirical FDRs near
$\approx 5\%$ in all DGPs as sample sizes increase. All other estimators capable
of classifying TEMs produce empirical FDRs ranging from $40\%$ to $80\%$.
However, we find that this FDR control sometimes results in a worse empirical
TPR in non-sparse settings (Figure~\ref{fig:tpr}). This is not the case in DGPs
with sparse conditional expected outcome models
(Figure~\ref{fig:results-summary}D). With respect to the empirical TNR, only the
filtered estimators produce near-perfect results, and this regardless of sample
size (Figure~\ref{fig:tnr}).

\paragraph{Computational Efficiency:} In RCT DGPs, the non-filtered LASSO-based
plug-in estimator generally has the lowest mean fit time across all sample sizes
and DGPs, followed closely by the non-filtered (augmented) modified covariates
estimators using the LASSO and the non-filtered XGBoost plug-in estimator
(Figure~\ref{fig:mean-fit-time}). The non-filtered AIPW-based estimators are the
slowest estimators to produce a rule in all DGPs and sample sizes. All
estimators using the TEM-VIP-based filtering procedure exhibit similar mean fit
times that are several orders of magnitude larger than that of the non-filtered
LASSO-based plug-in estimator. These filtered estimators' mean fit times are
sometimes similar, however, to their non-filtered counterparts when $n=1,\!000$,
as is the case with Causal Random Forests and the augmented modified covariates
estimators using XGBoost. Further, the filtered AIPW-based estimators have lower
mean fit times than their non-filtered versions.

Similar patterns are observed in the observational study DGPs' simulations
(Figure~\ref{fig:results-summary}B). The non-filtered plug-in estimator using
the LASSO is again the most computationally efficient procedure, followed by the
non-filtered plug-in estimator using XGBoost (Figure~\ref{fig:mean-fit-time}).
The non-filtered (augmented) modified covariates estimators and non-filtered the
Causal Random Forests exhibit similar mean fit times which are approximately two
orders of magnitude larger than the fastest estimator. The nonparametric
AIPW-based estimators are again the slowest of the non-filtered estimators,
though they produce estimated rules more quickly than any of the estimators
relying on the TEM-VIP-based filtering strategy.

\section{Discussion}
\label{sec:discussion}

The simulation results suggest that no ITR estimator uniformly dominates all
others with respect to rule quality, \reviewer{accurate} interpretability, and
computational efficiency in the continuous outcome, binary treatment setting. We
posit that the same is true in binary outcome, binary treatment settings. A
trade-off must therefore be made when selecting a method with which to learn a
treatment rule. Given that the primary goal of learning an ITR is to optimize
patient benefit---that is, to learn the optimal ITR---we hold that only
high-quality estimators should be considered. Selecting a procedure therefore
comes down to favoring either \reviewer{accurate} interpretability or
computational efficiency as a secondary goal.

If more importance is placed on accurate interpretability than computational
efficiency, the TEM-VIP-based filtering procedure is a virtual necessity. None
of the non-filtered estimators considered here reliably control the empirical
FDR or produce near-perfect empirical TPRs (Figures~\ref{fig:fdr},
\ref{fig:tnr}). Even when empirical TPR optimization is required, filtered
estimators generally perform as well as non-filtered estimators
(Figure~\ref{fig:tpr}). In sparse TEM scenarios, high-qualiry estimators are
generally filtered. In non-sparse TEM settings, the filtered LASSO-based plug-in
and AIPW-based estimators are typically only marginally worse than the
non-filtered estimators with the highest rule quality.

However, the use of this filtering procedure may be constrained by available
computing power. In RCTs, this filtering procedure makes fitting ITR estimators
slower than their non-filtered versions in small sample sizes
(Figure~\ref{fig:mean-fit-time})---with the exception of nonparametric ITR
estimators. This gap decreases in larger sample sizes: the mean fit times for
the filtered estimators are approximately constant, while the non-filtered
estimators' fit times follow an approximately exponential function of sample
size. In observational studies, where more computationally intensive nuisance
parameter estimators are generally required to ensure accurate TEM
classification, the filtered estimators are generally several orders of
magnitude slower than their non-filtered counterparts.

When there is limited time, effort, or computing resources available to learn an
ITR, the simulation study results suggest that the non-filtered LASSO-based
plug-in estimator or the non-filtered augmented modified covariates approach
using XGBoost should be used for analyzing data collected in an RCT. In
observational studies, the LASSO-based plug-in estimator is the most
computationally efficient of the estimators with sufficiently elevated rule
quality. We again emphasize, however, that the computational efficiency of these
procedures comes at the cost of accurate interpretability: these estimators
cannot reliably distinguish TEMs from non-modifiers.

\section{Recommendations}
\label{sec:recommendations}

We summarize our discussion of the simulation results with concise
recommendations about the choice of ITR estimators based on desired operating
characteristics.

\paragraph{High-quality and accurately interpretable:} The filtered LASSO-based
plug-in and AIPW-based estimators generally produce high-quality ITR estimates
while accurately recovering TEMs. These estimators are computationally
intensive, however. When computational resources are available, parallelized
implementations of these estimators will decrease the time required to produce a
treatment rule estimate.

\paragraph{High-quality and computationally efficient:} The LASSO-based plug-in
estimator is among the most high-quality in our simulation study, providing
empirical evidence that it is robust to model misspecification while being
exceptionally computationally efficient. However, as previously mentioned, this
estimator's built-in feature selection capabilities should not be used for TEM
discovery.

We emphasize that while this simulation study is comprehensive, practitioners
may benefit from performing bespoke benchmarks of their own. Code used to
generate the simulations results of Section~\ref{sec:simulations}, found
on
\href{https://github.com/PhilBoileau/pub\_guidance-on-ITR-estimation-in-HD}{GitHub},
may serve as foundation for extensions and modifications. This code relies on
the \texttt{simChef} R package, which provides a general framework for building
reproducible and computationally efficient simulation studies
\citet{duncan2024}.

\FloatBarrier

\section*{Code}

Code used to generate the results is available in the
\href{https://github.com/PhilBoileau/pub\_guidance-on-ITR-estimation-in-HD}{PhilBoileau/pub\_guidance-on-ITR-estimation-in-HD}
GitHub repository.

\section*{Acknowledgments.}

P.B. gratefully acknowledges the support of the Fonds de recherche du Qu\'ebec -
Nature et technologies and the Natural Sciences and Engineering Research Council
of Canada.

\FloatBarrier

\bibliographystyle{abbrvnat}
\bibliography{references}

\newpage

\FloatBarrier

\section*{Appendix}
\startappendix

\begin{figure}
  \centering
  \includegraphics[width=\textwidth]{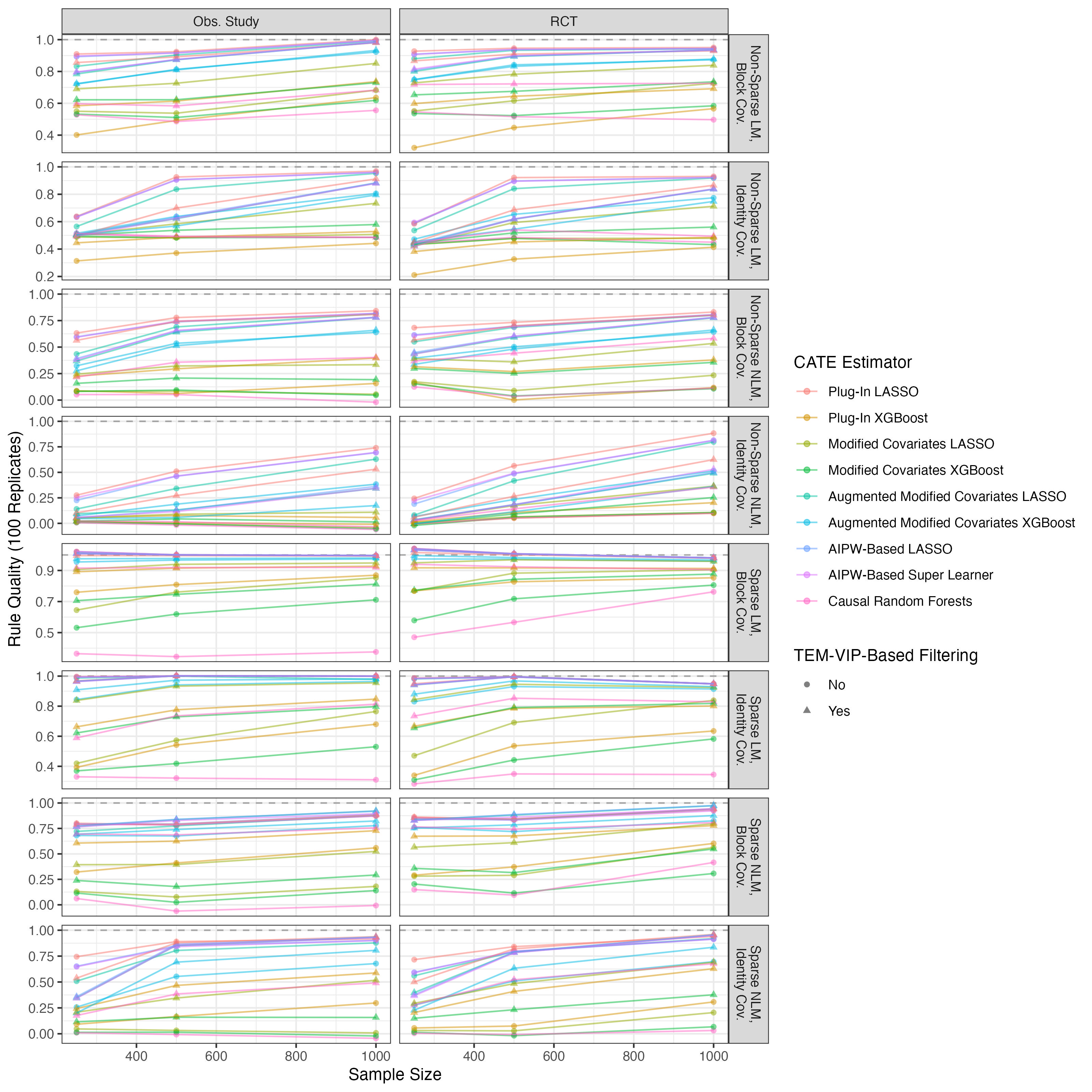}
  \caption{\label{fig:expected-outcome} {\em Rule Quality}: \reviewer{The
      relative rule quality computed over the 100 test set replicates vs. sample
      size, for each combination of DGP. Relative rule quality is defined as the
      mean ITR outcome divided by the optimal ITR defined in
      Proposition~\ref{prop:optimal-itr-mean}, which is approximated using a
      Monte Carlo procedure for each DGP. The dotted line corresponds to the
      idealized relative rule quality.}}
\end{figure}

\begin{figure}
  \centering
  \includegraphics[width=\textwidth]{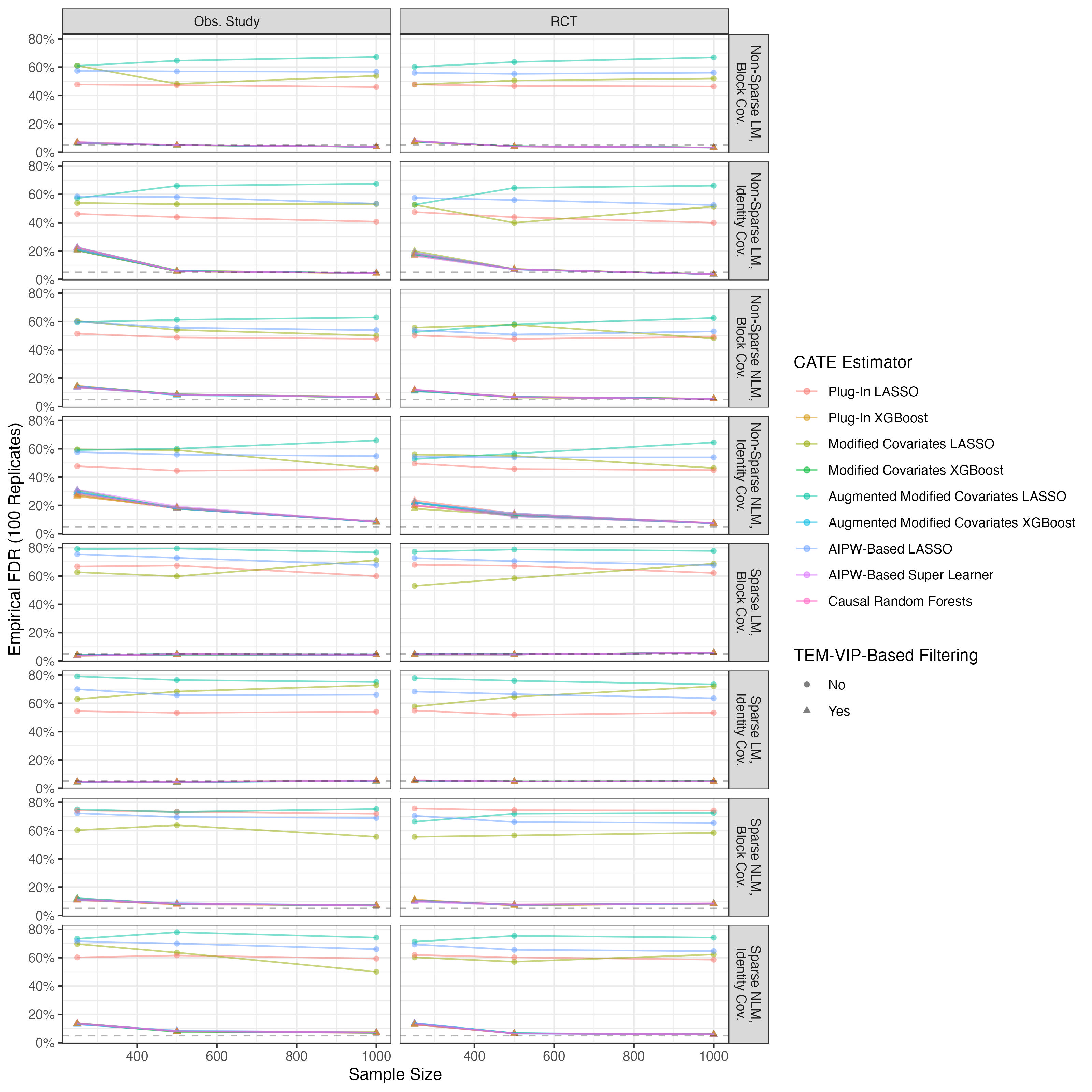}
  \caption{\label{fig:fdr} {\em Accurate Interpretability, FDR:} Empirical FDR
    computed over the 100 learning set replicates vs. sample size for each
    combination of DGP. The dotted line corresponds to desired nominal Type I
    error rate of $5\%$.}
\end{figure}

\begin{figure}
  \centering
  \includegraphics[width=\textwidth]{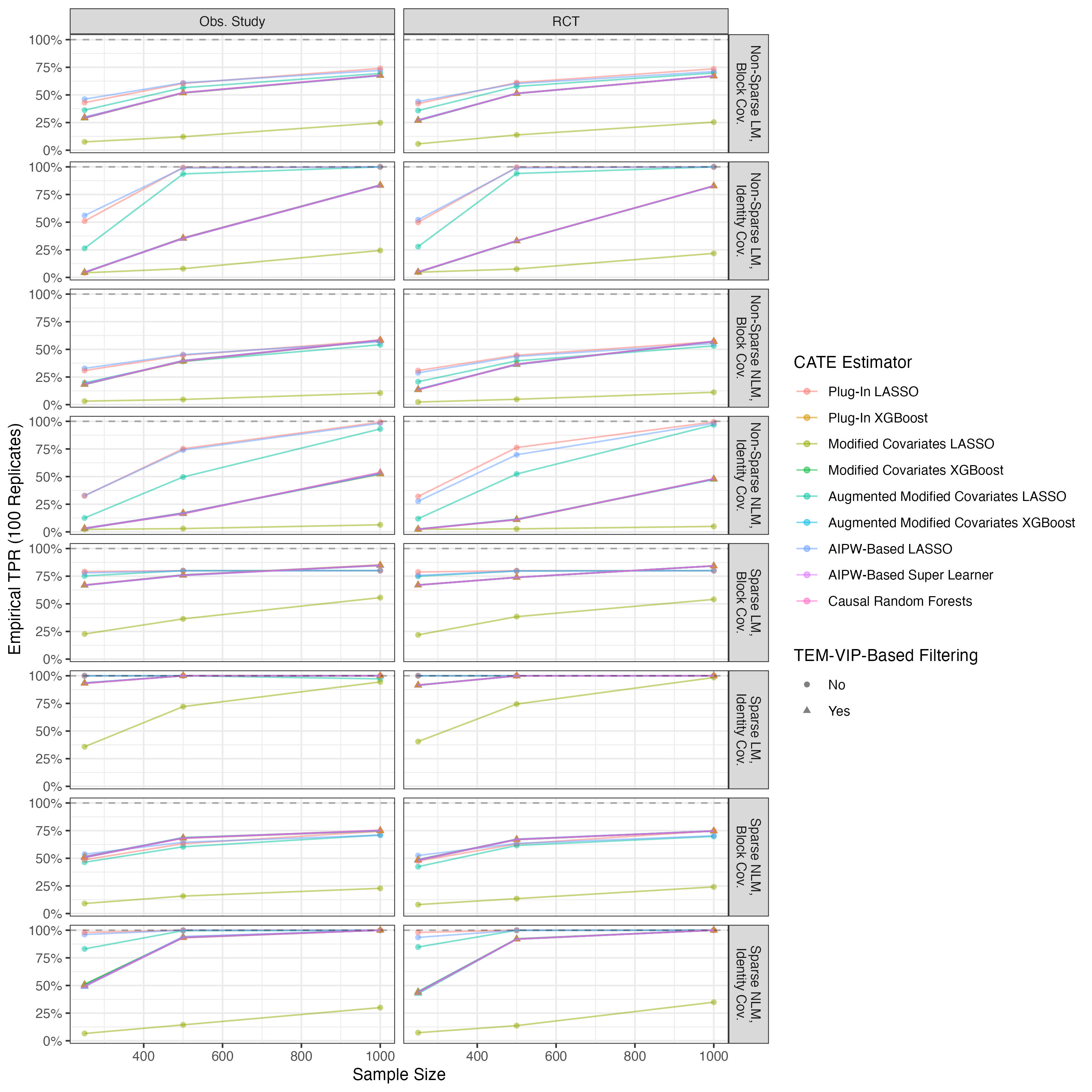}
  \caption{\label{fig:tpr} {\em \reviewer{Accurate} Interpretability, TPR:}
    Empirical TPR computed over the 100 learning set replicates vs. sample size
    for each combination of DGP. The dotted line corresponds to desired TPR of
    $100\%$.}
\end{figure}

\begin{figure}
  \centering
  \includegraphics[width=\textwidth]{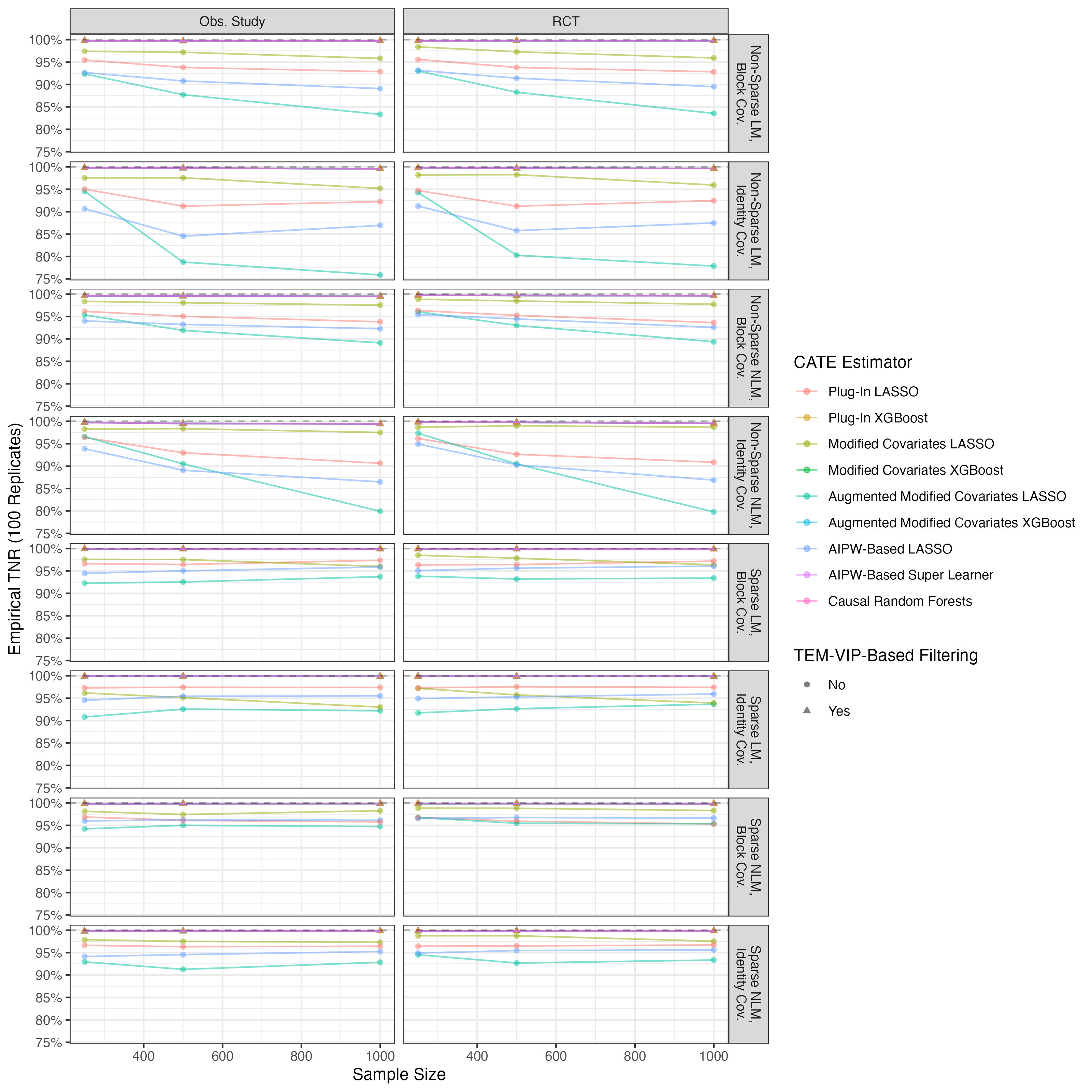}
  \caption{\label{fig:tnr} {\em \reviewer{Accurate} Interpretability, TNR:}
    Empirical TNR computed over the 100 learning set replicates vs. sample size
    for each combination of DGP. The dotted line corresponds to desired TNR of
    $100\%$.}
\end{figure}

\begin{figure}
  \centering
  \includegraphics[width=\textwidth]{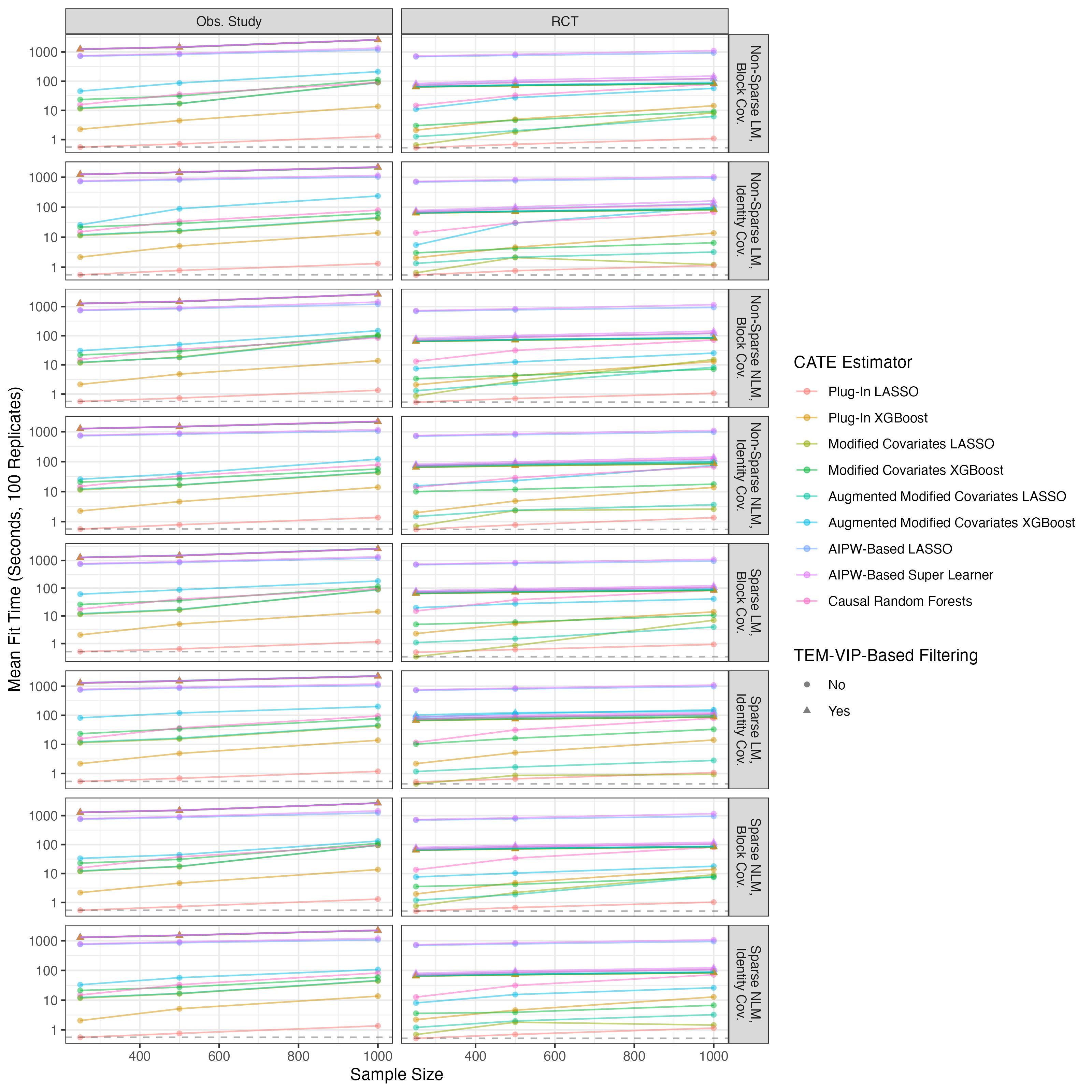}
  \caption{\label{fig:mean-fit-time} {\em Computational Efficiency:} Fit time
    computed over the 100 learning set replicates vs. sample size for each
    combination of DGP. The dotted line corresponds to the minimum mean fit time
    observed in each DGP.}
\end{figure}

\begin{table}[ht]
  \resizebox{\textwidth}{!}{%
\centering
}
\caption{Simulation results: Observational Study with Non-Sparse Non-Linear
  Outcome Model and Block Covariance Matrix}
\end{table}

\end{document}